\newtheorem{assumption}{Assumption}
\def\E{\mathbf{E}}
\def\Lip{{\mathrm{Lip}_{\delta,\beta}}}
\def\span{\mop{span}}
\def\N{\CN}
\def\J{\mathscr{J}}
\def\D{\mathscr{D}}
\let\d\partial
\begin{document} 

  \title{A simple framework to justify linear response theory}
  \author{Martin~Hairer and Andrew~J.~Majda}
  \institute{Courant Institute, New York University
   \\ \email{mhairer@cims.nyu.edu}, \email{jonjon@cims.nyu.edu}}
  \titleindent=0.65cm

  \maketitle
  \thispagestyle{empty}

\begin{abstract}
The use of linear response theory for forced dissipative
stochastic dynamical systems through the fluctuation dissipation
theorem is an attractive way to study climate change systematically among other
applications. Here, a mathematically rigorous justification of linear
response theory for forced dissipative stochastic dynamical systems is
developed. The main results are formulated in an abstract setting and apply to suitable
systems, in finite and infinite dimensions, that are of interest in climate change 
science and other applications. 
\end{abstract}

\section{Introduction}

One of the cornerstones of modern statistical physics is the fluctuation dissipation theorem (FDT) which 
roughly states that for systems of identical particles in statistical equilibrium, the average response to small external
perturbations can be calculated through the knowledge of suitable correlation
functions of the unperturbed statistical system. For some of the many practical applications of the FDT, see 
for example \cite{Kubo,Balescu}. The low frequency response to changes in external forcing for various components of the climate system is 
a central problem of contemporary climate change science. Leith \cite{Leith} suggested that if the
climate system satisfied a suitable FDT, then the climate response to small external forcing should be calculated by estimating suitable statistics in the
present climate. The climate system is a forced dissipative chaotic dynamical system which is physically quite 
far from the classical physicists' setting for FDT. Leith's suggestion has created a lot of recent activity in generating new
theoretical formulations and approximate algorithms for FDT that were applied to climate response with considerable skill (\cite{GD,GBD,GB,GBM,AM1,AM2,AM3,MW}).

The goal here is to provide a rigorous justification of linear response theory at infinite times for a rather large class of general
stochastic systems, including a class of problems with potential applications to climate change science. 
The improved regularity due to the presence of noise simplifies key aspects of the problem considerably
and circumvents the difficulties found to occur even in relatively simple deterministic systems (see \cite{RuelleDiff,BaladiLR}
for an overview and \cite{MR1233850,MR1679080,MR2201945,MR2433929,MR2399821} for a number of advances in this direction). As a consequence, we are able to focus on
systems that have some other features of realistic systems, namely a lack of ellipticity (we will only require hypoellipticity), non-compactness of state space,
a lack of global Lipschitz continuity of the coefficients, and even infinite-dimensionality of the state space.
The results established  below apply to a very broad class of
nonlinear functionals of state space which trivially include many quantities of interest, such as the mean and variance of subsets of variables. 

The mathematical formulation of FDT as linear response theory for forced dissipative stochastic dynamical systems \cite{Risken,M1,M2,Palmer,MW}
is an appropriate setting for these applications since many current improvements in the comprehensive computer models
for climate change involve stochastic components \cite{PalWil}, while lower dimensional reduced models typically also involve stochastic noise terms
\cite{M2,MAG,MGY}.
While there is a recent formal systematic discussion of FDT for time-dependent stochastic systems of interest in climate change science \cite{MW},
it is not obvious whether the results of this article generalise in a straightforward way.
The remainder of this article is structured as follows. After introducing our setup and notations, a general abstract formulation
of the main Theorem is presented in Section~\ref{sec:main} with a general discussion on its applicability provided in Section~\ref{sec:domain}.
Section~\ref{sec:SDE} illustrates the scope of the theorem for finite-dimensional forced dissipative stochastic systems
with a prototype structure which is relevant for some applications in climate change science.

\subsection*{Acknowledgements}

{\small
We would like to thank the referees for providing us with several literature hints regarding the corresponding problem for deterministic dynamical systems,
as well as for pointing out a number of typographical mistakes in the original draft. 
The research of MH was supported by 
an EPSRC Advanced Research Fellowship (grant number EP/D071593/1) and a Royal Society Wolfson Research Merit Award.
}

\section{Main result}
\label{sec:main}

Throughout the entire article, we will consider a family of Markov evolutions $\{\CP_t^a\,:\, t\ge 0,\, a \in \R\}$ on a separable (either finite
or infinite-dimensional) Banach space $\CB$. In other words, the $\CP_t^a$ are Markov operators over $\CB$ such that 
$\CP_0^a\phi = \phi$ and $\CP_t^a\CP_s^a\phi = \CP_{s+t}^a\phi$ for every bounded measurable function  $\phi \colon \CB \to \R$.
In this abstract setting, we take the pragmatic view that a Markov operator $\CP$ is determined by a transition kernel $P$ as
\begin{equ}
\bigl(\CP \phi\bigr)(x) = \int_\CB \phi(y)\,P(x,dy)\;,
\end{equ}
where the map $A \mapsto P(x,A)$ yields a Borel probability measure for every $x \in \CB$ and that the map $x \mapsto P(x,A)$ is
measurable for every Borel set $A$.
We will denote by $\CP^*$ the adjoint of $\CP$ acting on the space of finite signed Borel measures on $\CB$ by
\begin{equ}
\bigl(\CP^* \mu\bigr)(A) = \int_\CB P(y,A)\,\mu(dy)\;.
\end{equ}

We will also fix a particular value $a_0 \in \R$ and we are interested in the long-time behaviour of the system described by $\CP_t^a$ for $a$ close to $a_0$.
Throughout this article, we assume that for every $a$, there exists a probability measure $\mu_a$ on $\CB$ which is
invariant for $\CP_t^a$, so that 
$\bigl(\CP_t^a\bigr)^*\mu_a = \mu_a$ for every $t \ge 0$.
The aim of this note is to show that, under very weak natural assumptions on the dynamic, the map $a \mapsto \scal{\phi,\mu_a}$ is differentiable at $a_0$
for every sufficiently regular test function $\phi \colon \CB \to \R$. 
(Note that the fact that $a \in \R$ is not a restriction since if we take $a \in \R^n$ instead, it suffices to consider
its components one by one, keeping the others fixed.)

One of the main features of our proof will be that it relies on rather `soft' arguments and has conditions that are easily verifiable in practice.
For example, it does \textit{not} require the transition probabilities to have a density with respect to some reference measure (typically Lebesgue in 
finite dimensions or Gaussian in infinite dimensions).

Denote by $\CC_0^\infty(\CB)$ the set of all functions $\phi\colon \CB \to \R$ such that there exists $N>0$,
a linear map $T\colon \CB \to \R^N$, and a smooth, compactly supported map $\hat \phi\colon \R^N \to \R$ such that
$\phi = \hat \phi \circ T$.
Given two continuous functions $V,W\colon \CB \to [1,\infty)$, we set $\CC^1_{V,W}$ to be the closure of $\CC_0^\infty(\CB)$ under the norm
\begin{equ}[e:normV]
\|\phi\|_{1;V,W} = \sup_{x \in \CB} \Bigl({|\phi(x)| \over V(x)} + {\|D\phi(x)\| \over W(x)}\Bigr)\;.
\end{equ}
If we quotient this space by the space of all constant functions, then this norm is equivalent to the Lipschitz norm 
corresponding to the distance function
\begin{equ}[e:distfcn]
d_{V,W}(x,y) =  \rho_{W}(x,y) \wedge \bigl(V(x) + V(y)\bigr)\;.
\end{equ}
Here, for any positive continuous function $W$ bounded away from $0$, we define the metric
\begin{equ}
 \rho_{W}(x,y) = \inf_{\gamma(0)=x; \gamma(1) = y} \int_0^1 W\bigl(\gamma(s)\bigr)\,\|\dot\gamma(s)\|\, ds\;,
\end{equ}
where the infimum runs over all differentiable curves $\gamma\colon[0,1]\to \CB$ with the prescribed boundary conditions.
(Note that if $W \equiv 1$, then one simply has $\rho_W(x,y) = \|x-y\|$.)
We then make the following assumption:

\begin{assumption}\label{ass:gap}
There exists a time $t>0$ and a constant $\rho < 1$ such that
$\|\CP_t^{a_0} \phi - \scal{\phi,\mu_{a_0}}\|_{1;V,W} \le \rho \|\phi - \scal{\phi,\mu_{a_0}}\|_{1;V,W}$ for every $\phi \in \CC^1_{V,W}$.
\end{assumption}

\begin{remark}
This spectral gap condition is quite different from the $L^2$ or $L^p$ spectral gap conditions often encountered in the literature.
It is however possible to verify that such a condition is satisfied for a very large class of finite-dimensional and infinite-dimensional
(hypo-)elliptic diffusions. See \cite{SGNS} for the verification of Assumption~\ref{ass:gap} in the case of the 2D stochastic Navier-Stokes
equations and Section~\ref{sec:SDE} below for a class of finite-dimensional SDEs.
\end{remark}

In particular, Assumption~\ref{ass:gap} implies that the invariant measure for $\CP_t^{a_0}$ is unique, that the 
spectrum of $\CP_t^{a_0}$ as an operator on $\CC^1_{V,W}$ has an isolated eigenvalue of multiplicity $1$ at $1$,
and that the remainder of its spectrum is contained in the ball of radius $\rho$ around the origin. Therefore, for every centred function
$\phi \in \CC^1_{V,W}$, there exists a unique centred function $\psi \in \CC^1_{V,W}$ such that $\psi - \CP_t^{a_0} \psi = \phi$. We will 
henceforth use the notation $\psi = (1-\CP_t^{a_0})^{-1}\phi$. 

Our second assumption concerns the regularity of $\CP_t^a$ with respect to the parameter $a$. In order to state this assumption, we introduce the
space $\CC_U$, which is the weighted space of continuous functions obtained by completing $\CC_0^\infty(\CB)$ under the norm
\begin{equ}
\|\phi\|_U = \sup_{x \in \CB} {|\phi(x)| \over U(x)} \;.
\end{equ}
We then assume that

\begin{assumption}\label{ass:diff}
There exists a continuous function $U \ge V$ from $\CB$ to $[1,\infty)$ such that, for 
some fixed $t > 0$ and every $\phi \in \CC^1_{V,W}$, the map $a \mapsto \CP_t^a \phi$ is differentiable in a neighbourhood of $a_0$ 
when viewed as a map from $\R$ to $\CC_U$. Denoting this derivative by $\d \CP_t^a \phi$, we furthermore assume that
\begin{equ}[e:boundderPt]
\| \d \CP_t^a \phi\|_U \le C \|\phi\|_{1;V,W}\;,
\end{equ}
for some constant $C>0$.
\end{assumption}

\begin{remark}
Note how we are allowed to lose one derivative as well as some growth in the norm of $\phi$ when computing $\d \CP_t^a \phi$. This is what makes the
verification of \eref{e:boundderPt} very easy in practice, provided that $U$, $V$, and $W$ are chosen in a judicious way, see Section~\ref{sec:SDE} below.
When considering such problems on spaces of probability measures instead of spaces of test functions, it has been know for some time 
that in the context of dynamical systems, spaces of 
$\CC^1$ functions were suitable to take advantage of the expanding character of a map \cite{MR1016871}. Several highly non-trivial extensions of this fact
were considered much more recently in \cite{MR2201945,MR2285731,MR2399821}. See also \cite{MR2264836} for an instance of the use of weights in
a related dynamical systems context.
\end{remark}

Finally, we assume that we have an \textit{a priori} bound on the integrability of the invariant measures:

\begin{assumption}\label{ass:bound}
There exists $\eps > 0$
such that $\sup_{|a - a_0| < \eps} \int_\CB U(x)\,\mu_a(dx) < \infty$.
\end{assumption}

Our main result then states that:

\begin{theorem}\label{theo:linresp}
Let $\{\CP_t^a\}_{a\in \R}$ be a family of Markov semigroups over a separable Banach space $\CB$ such that there exist $\CC^1$ functions $U,V,W\colon \CB \to \R_+$
such that
Assumptions~\ref{ass:gap}, \ref{ass:diff} and \ref{ass:bound} are satisfied for some $a = a_0$ and some fixed $t > 0$. Then the map $a \mapsto \scal{\phi, \mu_a}$ is differentiable 
at $a_0$ for every $\phi \in \CC^1_{V,W}$ and the identity
\begin{equ}[e:linearResponse]
{d\over da} \scal{\phi, \mu_a}\Big|_{a = a_0} = \scal[b]{\d \CP_t^{a_0} (1-\CP_t^{a_0})^{-1}\bigl(\phi - \scal{\phi,\mu_{a_0}}\bigr), \mu_{a_0}}\;,
\end{equ}
holds. In particular, the right hand side of \eref{e:linearResponse} is well-defined.
\end{theorem}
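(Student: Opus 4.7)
The plan is to exploit the invariance of $\mu_a$ together with Assumption~\ref{ass:gap} in order to convert the derivative of $a \mapsto \scal{\phi,\mu_a}$ into an object that involves only the derivative of $a \mapsto \CP_t^a$, to which Assumption~\ref{ass:diff} applies directly. Since replacing $\phi$ by $\phi - \scal{\phi,\mu_{a_0}}$ does not affect either side of~\eref{e:linearResponse}, I may assume $\scal{\phi,\mu_{a_0}} = 0$. Assumption~\ref{ass:gap} then guarantees that $1-\CP_t^{a_0}$ is invertible on $\mu_{a_0}$-centered elements of $\CC^1_{V,W}$, so I set $\psi := (1-\CP_t^{a_0})^{-1}\phi \in \CC^1_{V,W}$ and write $\phi = \psi - \CP_t^{a_0}\psi$. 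Invariance of $\mu_a$ under $\CP_t^a$, in the form $\scal{\psi,\mu_a} = \scal{\CP_t^a\psi,\mu_a}$, then produces the key identity
\begin{equ}
\scal{\phi,\mu_a} - \scal{\phi,\mu_{a_0}} \;=\; \scal{\phi,\mu_a} \;=\; \scal{\CP_t^a\psi - \CP_t^{a_0}\psi,\mu_a}\;.
\end{equ}

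Dividing by $a-a_0$, Assumption~\ref{ass:diff} provides $\CP_t^a\psi - \CP_t^{a_0}\psi = (a-a_0)\,\d\CP_t^{a_0}\psi + r_a$ in $\CC_U$ with $\|r_a\|_U = o(|a-a_0|)$, while Assumption~\ref{ass:bound} supplies $M := \sup_{|a-a_0|<\eps}\int U\,d\mu_a < \infty$. The remainder is controlled by $|\scal{r_a/(a-a_0),\mu_a}| \le M\|r_a\|_U/|a-a_0| \to 0$, so the theorem reduces to showing
\begin{equ}
\lim_{a\to a_0} \scal{\d\CP_t^{a_0}\psi, \mu_a} \;=\; \scal{\d\CP_t^{a_0}\psi, \mu_{a_0}}\;.
\end{equ}

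The main subtlety, and the step I expect to require the most care, is that $\d\CP_t^{a_0}\psi$ is only guaranteed to lie in $\CC_U$ rather than in $\CC^1_{V,W}$, so the resolvent trick cannot be applied to it directly. I plan to circumvent this by approximation combined with a recursive use of the invariance identity. Pick $f_n \in \CC_0^\infty(\CB) \subset \CC^1_{V,W}$ with $\|f_n - \d\CP_t^{a_0}\psi\|_U \to 0$; the tail error is then $|\scal{\d\CP_t^{a_0}\psi - f_n,\mu_a}| \le M\|\d\CP_t^{a_0}\psi - f_n\|_U$, uniformly small in $a$ near $a_0$ by Assumption~\ref{ass:bound}. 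For each fixed $f_n$, set $h_n := (1-\CP_t^{a_0})^{-1}(f_n - \scal{f_n,\mu_{a_0}}) \in \CC^1_{V,W}$ and apply the same invariance identity with $f_n$ in place of $\phi$ to obtain
\begin{equ}
\scal{f_n,\mu_a} - \scal{f_n,\mu_{a_0}} \;=\; \scal{(\CP_t^a - \CP_t^{a_0})h_n, \mu_a}\;.
\end{equ}
The right-hand side tends to $0$ as $a \to a_0$, since $\|(\CP_t^a - \CP_t^{a_0})h_n\|_U \to 0$ (continuity, an immediate consequence of the differentiability in Assumption~\ref{ass:diff}) and $\int U\,d\mu_a \le M$. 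A standard $\varepsilon/2$ argument combining this with the uniform tail estimate then yields the required limit, which plugged back into the chain above delivers~\eref{e:linearResponse} and, simultaneously, the well-definedness of its right-hand side.
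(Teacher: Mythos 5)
Your proof is correct and follows essentially the same route as the paper's: the identical invariance/resolvent identity $\scal{\phi,\mu_a-\mu_{a_0}} = \scal{(\CP_t^a-\CP_t^{a_0})(1-\CP_t^{a_0})^{-1}\phi,\mu_a}$, the same splitting into a derivative term plus a remainder controlled by Assumptions~\ref{ass:diff} and~\ref{ass:bound}. The only cosmetic difference is in the term $\scal{\d \CP_t^{a_0}(1-\CP_t^{a_0})^{-1}\phi,\mu_a-\mu_{a_0}}$: the paper first proves Lipschitz continuity of $a\mapsto\mu_a$ in the dual (Wasserstein) norm and then uses a diagonal argument over $\CC_U$, whereas you approximate by elements of $\CC_0^\infty(\CB)$ and reapply the same identity to the approximants --- the same estimate in slightly different packaging.
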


\begin{remark}
As pointed out by one of the referees, the spectral gap condition is not essential for Theorem~\ref{theo:linresp} to hold. It could
be replaced by the requirement that the map $a \mapsto \mu_a$ is continuous in the topology of weak convergence 
and that the test function $\phi$ is such that there exists a function $\psi \in \CC^1_{V,W}$ such that 
$\bigl(1 - \CP_t^{a_0}\bigr) \psi = \phi -  \scal{\phi,\mu_{a_0}}$.

It is not clear how easy the latter condition would prove to verify in practice for systems that do not possess a spectral gap
as it would require a detailed understanding of the
properties of the resolvent which is bypassed by Assumption~\ref{ass:gap}. 
\end{remark}

\begin{remark}
The formula \eref{e:linearResponse} is identical to that familiar to dynamicists, see for example \cite[Eq.~3]{BaladiLR}.
Since the left hand side in \eref{e:linearResponse} is independent of $t$,
we can take the limit $t \to 0$ in the right hand side.
If the test function $\phi$ is centred with respect to $\mu_{a_0}$ and 
sufficiently `nice' so that $t (1-\CP_t^{a_0})^{-1}\phi \to \CL_{a_0}^{-1}\phi$, we then obtain the linear response formula
\begin{equ}[e:LR]
{d\over da} \scal{\phi, \mu_a}\Big|_{a = a_0} = \scal[b]{\d \CL_{a_0}\, \CL_{a_0}^{-1}\phi, \mu_{a_0}}\;,
\end{equ}
where $\d \CL_a$ is the formal derivative of the generator $\CL_a$ with respect to the 
parameter $a$. (This can also be obtained by formally differentiating the identity $\CL_a^* \mu_a = 0$ with
respect to the parameter $a$.) 

If the state space is $\R^n$, $\mu_{a_0}$ has a density $\exp(-H(x))$ 
with respect to Lebesgue measure
and $\d \CL_{a_0} = F(x)\nabla_{\!x}$ is a first-order differential operator, then we can define the conjugate
current $J$ by
\begin{equ}
J(x) = F(x) \nabla_{\!x} H(x) - \div_x F(x)\;.
\end{equ}
With this notation, \eref{e:LR} can be interpreted as the Green-Kubo relations
\begin{equ}
{d\over da} \scal{\phi, \mu_a}\Big|_{a = a_0} = \scal[b]{J \CL_{a_0}^{-1}\phi, \mu_{a_0}} = \int_0^\infty \E \bigl(J(x_0) \phi(x_t)\bigr)\,dt\;,
\end{equ}
where expectations are taken over the stationary Markov process $x_t$ with invariant measure $\mu_{a_0}$.
\end{remark}

\begin{remark}
Note that even though we assume uniqueness of the invariant measure for $\CP_t^{a_0}$, this need not be the case for $a \neq a_0$.
Even though uniqueness doesn't need to hold for $a \neq a_0$, our result implies differentiability of the map $a \mapsto \scal{\phi, \mu_a}$ at $a = a_0$ 
for any choice of $a \mapsto \mu_a$. This is analogous to the result shown for partially hyperbolic systems in \cite{MR2031432}.
\end{remark}

\begin{remark}
The classical perturbation results found for example in \cite{Kato}
do not seem to be applicable directly to our situation since we do \textit{not} assume that the map $a \mapsto \CP_t^a \phi$ is differentiable in $\CC^1_{V,W}$.
\end{remark}

\begin{proof}
Let us first show that the map $a \mapsto \mu_a$ is Lipschitz continuous at $a = a_0$ if we endow the space of probability measures
on $\CB$ which integrate $d_{V,W}(0,x)$ with the norm dual to that of $\CC_{V,W}^1$. For every $\phi \in \CC_{V,W}^1$, we have the identity
\begin{equ}
\scal{\phi, \mu_a - \mu_{a_0}} = \scal{\CP_t^{a_0} \phi, \mu_a - \mu_{a_0}} + \scal{(\CP_t^a - \CP_t^{a_0})\phi, \mu_{a}}\;.
\end{equ}
Since none of these expressions changes if we add a constant to $\phi$, we can assume in the sequel that $\phi$ is centred with respect
to $\mu_{a_0}$ so that, using the invertibility of $(1-\CP_t^{a_0})$ and setting $\psi = (1 - \CP_t^{a_0}) \phi$, we obtain the identity
\begin{equ}[e:exprdiff]
\scal{\psi, \mu_a - \mu_{a_0}} = \scal{(\CP_t^a - \CP_t^{a_0})(1-\CP_t^{a_0})^{-1}\psi, \mu_{a}}\;.
\end{equ}
It now follows readily from Assumptions~\ref{ass:diff} and \ref{ass:bound} that 
\begin{equs}
\sup_{\|\phi\|_{1;V,W} = 1} \scal{\phi, \mu_a - \mu_{a_0}} &= \sup_{\|\phi\|_{1;V,W} = 1} \scal{(\CP_t^a - \CP_t^{a_0})(1-\CP_t^{a_0})^{-1}\phi, \mu_{a}}\\
&\le C|a-a_0| \sup_{\|\psi\|_{U} = 1} \scal{\psi, \mu_{a}} \le \tilde C|a-a_0|\;,
\end{equs}
for some positive constants $C$ and $\tilde C$, as required. 

Since the norm dual to that of $\CC_{V,W}^1$ (restricted to differences between probability measures) 
is nothing but the Wasserstein-$1$ distance corresponding to $d_{V,W}$ \cite{Villani}, it follows that $\mu_a \to \mu_{a_0}$ weakly for
$a \to a_0$. Combining this with Assumption~\ref{ass:bound}, we conclude in particular that 
\begin{equ}[e:convW]
\scal{\Psi,\mu_a} \to \scal{\Psi,\mu_{a_0}}\;,\qquad  \forall \Psi \in \CC_U\;. 
\end{equ}
(This follows by a standard `diagonal argument' from the fact that $\Psi$ can be approximated by continuous 
bounded functions by the definition of $\CC_U$.)

We now have all the preliminaries in place to show that $\scal{\phi,\mu_a}$ is differentiable and to compute its derivative. 
It follows from Assumption~\ref{ass:diff} that, for every (centred) $\phi \in \CC^1_{V,W}$, 
there exists a function $a \mapsto \CR_a \in \CC_U$ with $\lim_{a \to a_0} \|\CR_a\|_U = 0$ and such that we have the identity
\begin{equ}
{1\over a-a_0} (\CP_t^a - \CP_t^{a_0})(1-\CP_t^{a_0})^{-1}\phi = \d \CP_t^{a_0} (1-\CP_t^{a_0})^{-1}\phi + \CR_a \;.
\end{equ}
As a consequence of \eref{e:exprdiff}, we have
\begin{equs}
\scal{\phi, \mu_a - \mu_{a_0}} &= \scal{(\CP_t^a - \CP_t^{a_0})(1-\CP_t^{a_0})^{-1}\phi, \mu_{a_0}} \\
&\quad + \scal{(\CP_t^a - \CP_t^{a_0})(1-\CP_t^{a_0})^{-1}\phi, \mu_{a} - \mu_{a_0}}\;,
\end{equs}
so that 
\begin{equs}
\Bigl|{\scal{\phi, \mu_a - \mu_{a_0}} \over a-a_0} -  \scal{\d \CP_t^{a_0} (1-\CP_t^{a_0})^{-1}\phi, \mu_{a_0}}\Bigr|
&\le 2|\scal{\CR_a, \mu_{a_0}}| + |\scal{\CR_a, \mu_{a}}| \\
&+ |\scal{\d \CP_t^{a_0} (1-\CP_t^{a_0})^{-1}\phi, \mu_a - \mu_{a_0}}|\;.
\end{equs}
The first two terms on the right hand side converge to zero thanks to the continuity of $\CR_a$ at $a_0$ combined with
Assumption~\ref{ass:bound}, while the last term converges to zero thanks to \eref{e:convW}, thus concluding the proof.
\end{proof}

\section{Domain of applicability}
\label{sec:domain}

In this section, we discuss under which circumstances one can expect a spectral gap result in a norm of the type \eref{e:normV}.
Harris' theorem \cite{Harris,MT} provides a general methodology to verify whether a system satisfies a spectral gap in $\CC_V$ for some $V$.
If it so happens that such a system also has the property that one can find $W$ such that,
for some fixed $t_0>0$, $\CP_{t_0}^{a_0}$ is a bounded operator from $\CC_V$ into $\CC_{V,W}^1$, then it follows that $\CP_t^{a_0}$ has a spectral
gap in $\CC_{V,W}^1$. However, such an approach presents two problems:
\begin{enumerate}
\item While Harris' theorem is suitable for finite-dimensional diffusions, there are many examples of stochastic PDEs satisfying a spectral gap
in a space of the type $\CC_{V,W}^1$ for which Harris' theorem does not apply.
\item In finite dimensions, as soon as a system satisfies H\"ormander's bracket condition, $\CP_t^{a_0}$ maps continuous functions into smooth functions.
In particular, there always exists some $W$ such that it maps $\CC_V$ into $\CC_{V,W}^1$. The problem is that there does not appear to be a general 
result available that controls the growth of such a function $W$. This is problematic, since the function $U$ appearing in Assumption~\ref{ass:diff}
is generally comparable to $W$, so that Assumption~\ref{ass:bound} is hard to verify without control on $W$.
\end{enumerate}

Our aim is therefore to present conditions under which a spectral gap in  a space of the type $\CC_{V,W}^1$ can be verified directly, without
requiring a spectral gap result in a space of the type $\CC_V$.
In \cite{WeakHarris}, a general framework was laid out that gives conditions under which a ``spectral gap'' is obtained in a Wasserstein-type
distance. Unfortunately, the Wasserstein distance in question is with respect to a metric comparable to the square root of the original
metric on the space. Therefore, in our setting, this actually yields a spectral gap in a space of ${1\over 2}$-H\"older continuous functions, which
is useless for our purpose. On the other hand, a spectral gap in a space of the type \eref{e:normV} was obtained for the stochastic Navier-Stokes
equations in \cite{SGNS}.

In this section, we obtain a slightly different set of conditions under which a spectral gap in a norm of the type \eref{e:normV}
can be shown for a general Markov operator $\CP$ over a Banach space $\CB$. 
Our conditions are inspired by those in \cite{SGNS,WeakHarris}, aiming in particular at being applicable to 
equations with conservative quadratic nonlinearities, as arising in applications.
The main ingredient is again a gradient bound of the following type, which was shown to hold for a large class of
stochastic PDEs in \cite{NS,Hypo}:

\begin{assumption}\label{ass:contr}
There exist continuous functions $U_1, U_2\colon \CB \to \R_+$ such that, for every $\eps>0$ there is a constant $C_\eps > 0$ such that the bound
\begin{equ}[e:gradientbound]
\|D\CP \phi(x)\|^2 \le \eps  U_1^2(x) \bigl(\CP \|D\phi\|^2\bigr)(x) + C_\eps U_2^2(x) \bigl(\CP \phi^2\bigr)(x)\;,
\end{equ}
holds for every $x \in \CB$ and every $\phi \in \CC^1_b(\CB)$.
\end{assumption}

\begin{remark}
In the case of the stochastic 2D Navier-Stokes equations, this bound can be shown to hold with $U_1(x) = U_2(x) = \exp (\eta \|x\|^2)$ for arbitrarily small $\eta > 0$, 
with $x$ the vorticity of the vector field and $\|x\|$ its $L^2$-norm.
\end{remark}

The aim of this section is to show that if a Markov operator $\CP$ satisfies Assumption~\ref{ass:contr} and has sufficiently good contraction properties,
then it is possible to deduce a spectral gap in the norm \eref{e:normV} (for a suitable choice of $V$ and $W$), provided that the transition probabilities
satisfy a kind of topological irreducibility condition. This statement can be formulated precisely in the following way:

\begin{theorem}\label{theo:HarrisWeak}
  Let $\CP$ be a Markov operator over a separable Banach space $\CB$ mapping $\CC_0^\infty(\CB)$ into
  $\CC^1_{V,W}$ and satisfying Assumption~\ref{ass:contr}.  Suppose furthermore that there
  exist continuous functions $V,W\colon \CB \to \R_+$ satisfying
\begin{equ}[e:superlyap]
U_1^2 \CP W^2 + U_2^2 \CP V^2 \le C W^2\;, \qquad \CP V \le {1\over 2} V + K\;,
\end{equ}
for some constants $C$ and $K$.

Finally, assume that there exists a point $x_\star \in \CB$ such that, 
 for every $\eps > 0$ and every $C>0$ there exists $\alpha > 0$ such that 
\begin{equ}[e:irred]
\inf_{x\,:\,V(x) \le C} \CP\bigl(x,B_\eps(x_\star)\bigr) \ge \alpha\;.
\end{equ}

Then, $\CP$ has exactly one invariant probability measure $\mu_\star$. Furthermore, there exist constants $C$ and $\gamma > 0$ such that the bound
\begin{equ}[e:gap]
\|\CP^n \phi - \scal{\phi,\mu_\star}\|_{1;V,W} \le C e^{-\gamma n} \|\phi - \scal{\phi,\mu_\star}\|_{1;V,W} \;,
\end{equ} 
holds for every $\phi \in \CC^1_{V,W}$. Finally, there exist constants $\delta > 0$, $\beta > 0$ and $\rho <1$
such that the bound
\begin{equ}[e:gaptime1]
\|\CP \phi - \scal{\phi,\mu_\star}\|_{1;1+\beta V,\delta^{-1}W} \le \rho \|\phi - \scal{\phi,\mu_\star}\|_{1;1+\beta V,\delta^{-1}W} \;,
\end{equ} 
holds for every $\phi \in \CC^1_{V,W}$. 
\end{theorem}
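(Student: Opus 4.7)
\emph{Strategy.} The plan is to prove the one-step contraction \eref{e:gaptime1} first; iterating it yields \eref{e:gap}, since the norms $\|\cdot\|_{1;V,W}$ and $\|\cdot\|_{1;1+\beta V,\delta^{-1}W}$ are equivalent on $\CC^1_{V,W}$ for any fixed $\beta,\delta>0$. Following the weak-Harris strategy of \cite{SGNS,WeakHarris}, I view \eref{e:gaptime1} dually as a Wasserstein contraction of $\CP^*$ with respect to $d_{1+\beta V,\delta^{-1}W}$; once such a contraction holds between $\CP^*\delta_x$ and $\CP^*\delta_y$ for arbitrary $x,y$, existence and uniqueness of $\mu_\star$ follow by completeness of the associated Wasserstein space and a Banach fixed-point argument applied to $\CP^*$.

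\emph{Derivative estimate.} For $\phi \in \CC^1_{V,W}$ set $M_0 = \sup_x|\phi(x)|/(1+\beta V(x))$ and $M_1 = \sup_x \delta\|D\phi(x)\|/W(x)$, so that $\|D\phi\|^2 \le \delta^{-2}M_1^2 W^2$ and $\phi^2 \le 2M_0^2(1+\beta^2 V^2)$ pointwise. Substituting these bounds into Assumption~\ref{ass:contr}, applying the super-Lyapunov bound \eref{e:superlyap}, and noting that $U_2^2 \le CW^2$ (which follows from $\CP V^2 \ge 1$ since $V \ge 1$), one obtains the pointwise estimate
\begin{equ}
\delta\,\|D(\CP\phi)(x)\|/W(x) \le \sqrt{C\eps}\,M_1 + \delta\sqrt{2CC_\eps(1+\beta^2)}\,M_0\;.
\end{equ}
Choosing $\eps$ small, then $\delta$ small (for any fixed $\beta$), makes both coefficients smaller than $\rho/2$, delivering the derivative half of \eref{e:gaptime1}.

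\emph{Value estimate via coupling.} For the value part, I construct a coupling $(X',Y')$ of $P(x,\cdot)$ and $P(y,\cdot)$. When both $V(x),V(y) \le R$, the irreducibility \eref{e:irred} provides both $P(x,\cdot)$ and $P(y,\cdot)$ with mass at least $\alpha$ on $B_\eps(x_\star)$, yielding a coupling that places $X',Y'$ in $B_\eps(x_\star)$ simultaneously with probability at least $\alpha$; in that regime $\rho_{\delta^{-1}W}(X',Y') \le 2\eps\delta^{-1}\sup_{B_\eps(x_\star)} W$, while on the residual event the truncation $(1+\beta V(X'))+(1+\beta V(Y'))$ is controlled through $\CP V \le V/2+K$. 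When one of $V(x),V(y)$ exceeds $R$, the Lyapunov bound alone gives $\CP(1+\beta V) \le \theta(1+\beta V)$ with $\theta<1$ uniformly on $\{V>R\}$, provided $R$ is large and $\beta$ is small. Summing these contributions yields a Wasserstein contraction $\E[d_{1+\beta V,\delta^{-1}W}(X',Y')] \le \rho\, d_{1+\beta V,\delta^{-1}W}(x,y)$ for some $\rho<1$, which by duality bounds the value part of $\|\CP\phi\|_{1;1+\beta V,\delta^{-1}W}$. Combined with the derivative bound, this establishes \eref{e:gaptime1}.

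\emph{The main obstacle} is the simultaneous tuning of $R,\beta,\eps,\delta$: the parameter $\delta$ enters the derivative bound favourably (small $\delta$ helps contract the cross-term) but enters the coupling-induced diameter $2\eps\delta^{-1}\sup_{B_\eps(x_\star)} W$ unfavourably, so one must send $\eps\to 0$ faster than $\delta\to 0$, all while keeping $R$ large enough that the Lyapunov-driven far-field contraction and the irreducibility-driven near-field contraction deliver the same rate $\rho<1$. Once this tuning is achieved, iteration of \eref{e:gaptime1} produces \eref{e:gap} with $\gamma = -\log\rho$, and the Cauchy property of $(\CP^{*n}\nu)$ under the corresponding Wasserstein metric supplies the invariant measure $\mu_\star$.
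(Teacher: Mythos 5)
Your proposal is correct and follows essentially the same route as the paper: a one-step Wasserstein contraction for the truncated, weighted metric $\hat d_{\beta,\delta}(x,y)=\delta^{-1}\rho_W(x,y)\wedge\bigl(2+\beta V(x)+\beta V(y)\bigr)$, established by using the gradient bound \eref{e:gradientbound} with \eref{e:superlyap} for nearby points, the Lyapunov drift for distant points with large $V$, and the irreducibility-based coupling into $B_\eps(x_\star)$ for distant points with small $V$, with the parameters fixed in the same order ($\eps$ in the gradient bound, then $\delta$, then the ball radius, then $\beta$). The only cosmetic difference is that the paper fixes the small-set threshold at $V(x)+V(y)<4(K+2)$ independently of $\beta$ (obtaining the far-field rate $1-(\beta\wedge\tfrac14)$), which sidesteps the mild circularity your choice $R\sim 1/\beta$ would introduce between $R$, $\alpha$, and $\beta$.
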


\begin{remark}
It will follow from the proof that it is sufficient that the second inequality in \eref{e:superlyap} holds with 
${1\over 2}$ replaced by any other constant $\kappa < 1$.
Furthermore, we do not need \eref{e:irred} to hold for every $C>0$, 
but only for some $C > K/(1-\kappa)$.
\end{remark}

\begin{proof}
Since the norms $\|\,\cdot\,\|_{1;1+\beta V,\delta^{-1}W}$ and $\|\,\cdot\,\|_{1;V,W}$
are equivalent for every choice of $\beta, \delta > 0$, \eref{e:gap} follows immediately from 
\eref{e:gaptime1}, so that it is sufficient to check \eref{e:gaptime1}.

The proof uses very similar ideas to the proofs of \cite[Theorem~4.7]{WeakHarris} and \cite[Theorem~3.4]{SGNS}.
In particular, we use the following trick. For $\delta > 0$ and $\beta > 0$,
we introduce the distance 
\begin{equ}
\hat d_{\beta,\delta}(x,y) = \delta^{-1}\rho_W(x,y) \wedge \bigl(2 + \beta V(x) + \beta V(y)\bigr)\;.
\end{equ}
This distance is of course equivalent to the distance $d_{V,W}$ introduced in \eref{e:distfcn}, but it turns out that allowing the freedom
of choosing both $\delta$ and $\beta$ sufficiently small will considerably simplify the proofs.
With this definition, a $\CC^1$ function $\phi$ is Lipschitz continuous with Lipschitz constant $1$ with respect to $\hat d_{\beta,\delta}$ if and only if
\begin{equ}[e:lipphi]
\|D\phi(x)\| \le \delta^{-1} W(x)\;,\qquad |\phi(x)| \le 1+ \beta V(x)\;.
\end{equ}
(For the second inequality, one might have to add a suitable constant to $\phi$.) Denote by $\Lip$ the set of all such functions.
As in \cite{WeakHarris,SGNS}, we show that it is possible
to choose $\delta$ and $\beta$ in such a way that the bound
\begin{equ}[e:requestedbound]
\hat d_{\beta,\delta}\bigl(\CP(x,\cdot\,),\CP(y,\cdot\,)\bigr) \le \alpha d_{\beta,\delta}(x,y)\;,
\end{equ}
holds for some $\alpha < 1$ uniformly over all pairs $x,y \in \CB$. We again show \eref{e:requestedbound} separately in three different cases
and we use separately the three ingredients of the theorem in each of these cases.

\textbf{The case $\boldsymbol{\rho_W(x,y) \le \delta \bigl(2 + \beta V(x) + \beta V(y)\bigr)}$.} In this case, we make use of the gradient
bound \eref{e:gradientbound}, together with the `super-Lyapunov' structure \eref{e:superlyap} to deduce that if $\phi$ satisfies
\eref{e:lipphi}, then for every $\eps > 0$ there exists $C_\eps$ such that the bound
\begin{equ}
\|D\CP\phi(x)\| \le \eps \delta^{-1} W(x) + C_\eps W(x)\;,
\end{equ}
holds uniformly for all such $\phi$ and for all $\beta \le 1$, say. It follows that by first choosing $\eps = {1 \over 4}$ and then
choosing $\delta$ small enough so that $C_\eps \le 1/(4\delta)$, one has
\begin{equ}
\|D\CP\phi(x)\| \le {1\over 2\delta}W(x)\;,
\end{equ} 
which immediately implies that
\begin{equs}
\hat d_{\beta,\delta}\bigl(\CP(x,\cdot\,),\CP(y,\cdot\,)\bigr) &\le \sup_{\phi \in \Lip}  |\CP\phi(x) - \CP\phi(y)| \\
&\le \sup_{\phi \in \Lip} \inf_{\gamma}
\int_0^1 \|D\CP\phi(\gamma(s))\|\,|\dot\gamma(s)|\,ds \\
&\le {1\over 2\delta} \sup_{\phi} \inf_{\gamma} \int_0^1 W(\gamma(s))\,|\dot\gamma(s)|\,ds \\
&\le {1\over 2\delta} \rho_W(x,y) \le {1\over 2}\hat d_{\beta,\delta}(x,y)\;,
\end{equs}
as requested.

\textbf{\mathversion{bold}The case $\rho_W(x,y) > \delta \bigl(2 + \beta V(x) + \beta V(y)\bigr)$ and $V(x) + V(y) \ge 4(K+2)$.} In this case,
we only make use of the fact that $V$ is a Lyapunov function. We have indeed
\begin{equs}
\hat d_{\beta,\delta}\bigl(\CP(x,\cdot\,),\CP(y,\cdot\,)\bigr) &\le  \sup_{\phi \in \Lip}|\CP\phi(x) - \CP\phi(y)| \le  \sup_{\phi \in \Lip}\bigl(|\CP\phi(x)| + |\CP\phi(y)|\bigr) \\
&\le 2 + \beta \CP V(x) + \beta \CP V(y) \le 2 + {\beta\over 2} \bigl(V(x) + V(y)\bigr) + \beta K\\
&\le (2-2\beta) + {3\beta\over 4} \bigl(V(x) + V(y)\bigr) \le \bigl(1-(\beta\wedge \textstyle{1\over 4})\bigr)\hat d_{\beta,\delta}(x,y) \;,
\end{equs}
which again yields a contraction, but with a strength that depends this time on the parameter $\beta$. Finally, we have

\textbf{\mathversion{bold}The case $\rho_W(x,y) > \delta \bigl(2 + \beta V(x) + \beta V(y)\bigr)$ and $V(x) + V(y) < 4(K+2)$.} 
In this case, we make use of our final assumption, namely \eref{e:irred}. At this stage we assume that $\delta > 0$ is fixed, sufficiently small
so that our first step goes through. We can then find some sufficiently small $\eps > 0$ so that $\hat d_{\beta,\delta}(x_\star, y) \le {1\over 2}$
for all $y \in B_\eps(x_\star)$, uniformly over $\beta \le 1$. In this case, we can decompose the Markov operator $\CP$ into
a combination $\CP = \alpha\CP_1 + (1-\alpha)\CP_2$ of Markov operators such that $\CP_1\bigl(x,B_\eps(x_\star)\bigr) = 1$
for every $x$ such that $V(x) \le 4(K+2)$. We conclude that
\begin{equs}
\hat d_{\beta,\delta}\bigl(\CP(x,\cdot\,),\CP(y,\cdot\,)\bigr) &\le \alpha \hat d_{\beta,\delta}\bigl(\CP_1(x,\cdot\,),\CP_1(y,\cdot\,)\bigr) + (1-\alpha)\hat d_{\beta,\delta}\bigl(\CP_2(x,\cdot\,),\CP_2(y,\cdot\,)\bigr)\\
&\le \alpha + (1-\alpha)\bigl(2 + \beta \CP_2 V(x) + \beta \CP_2 V(x)\bigr)\\
&\le \alpha + 2(1-\alpha) + \beta \CP V(x) + \beta \CP V(x) \\
&\le 2-\alpha + {\beta\over 2} \bigl(V(x) + V(y) + 2K\bigr) \le 2-\alpha + \beta (3K + 4)\;.
\end{equs}
We can now choose $\beta$ sufficiently small so that this constant is strictly smaller than $2$. Since on the other hand one has $\hat d_{\beta,\delta}(x,y) \ge 2$,
the claim now follows.
\end{proof}

\section{Application to stochastic differential equations}
\label{sec:SDE}

In this section, we aim to apply the results of the previous sections to a concrete class of stochastic
differential equations. In order to keep our conditions clean, we restrict ourselves to SDEs with additive noise and
polynomial nonlinearities. We thus assume that for some integers $N,M \ge 1$, we have
\begin{equ}[e:SDE]
dx(t) = \sum_{k=0}^N \N_k(x,\ldots,x)\,dt + \sum_{k=1}^M \sigma_k\,dw_k(t)\;,\qquad x(0) \in \R^d\;.
\end{equ}
Here, the functions $\N_k$ are symmetric multilinear maps of order $k$ and the $\sigma_k$ are elements of $\R^d$. We
also define the function $\N\colon \R^d \to \R^d$ by
\begin{equ}
\N(x) = \sum_{k=0}^N \N_k(x,\ldots,x)\;,
\end{equ}
as well as the $d\times M$ matrix $\Sigma$ given by $\Sigma = (\sigma_1,\ldots,\sigma_M)$.

When $N=2$, equations of this type typically arise as effective dynamic for large-scale structures in climate models
\cite{M1}. Fixing $N$ and $M$, we make the following structural assumptions on the $\N_k$'s and the $\sigma_k$'s:

\begin{assumption}\label{ass:coercive}
There exist constants $c,C>0$ such that the bound
\begin{equ}[e:bounddissip]
\scal{x,\N(x)} \le C - c \|x\|^{N}\;,\quad \forall x\in\R^d
\end{equ}
holds.
\end{assumption}

\begin{remark}
The bound \eref{e:bounddissip} can hold even if the degree $N$ of the nonlinearity is even. In this
case, it implies that $\scal{x,\N_N(x,\ldots,x)} = 0$. If $N$ is odd, then \eref{e:bounddissip} would actually
also hold with $N$ replaced by $N+1$, but this is not important for our purpose.
\end{remark}

\begin{assumption}\label{ass:hypo}
Define $A_k \subset \R^d$ recursively by $A_0 = \span\{\sigma_1,\ldots,\sigma_M\}$ and
\begin{equ}
A_{k+1} = \span \bigl( A_k \cup \{\N_N(y_1,\ldots,y_N)\,:\, y_j \in A_k\}\bigr)\;.
\end{equ}
Then, one has $A_d = \R^d$.
\end{assumption}

\begin{assumption}\label{ass:control}
The control system associated to \eref{e:SDE} is approximately controllable. In other words, for every
initial condition $x_0$, every $T>0$, every terminal condition $x_T$, and every $\eps > 0$, there 
exists a smooth control $u \in \CC^\infty([0,T],\R^M)$ such that the solution to
\begin{equ}
\dot x(t) = \N(x(t)) + \sigma\, u(t)\;,\qquad x(0) = x_0\;,
\end{equ}
satisfies $\|x(T) - x_T\| \le \eps$.
\end{assumption}

\begin{remark}
If $N$ is odd, then Assumption~\ref{ass:control} is redundant since it then follows from
Assumption~\ref{ass:hypo}, see \cite{PolyOdd}. However, in the more interesting
case where $N$ is even, the situation is not so easy. A number of criteria are given in
\cite{Sussmann,Kawski,PolyGen}, but they are all sufficient without being necessary. 
In the particular case of truncations of the two and three-dimensional Navier-Stokes
equations where furthermore $\Sigma$ is assumed to be diagonal in the Fourier basis, 
it was also shown that Assumption~\ref{ass:control} follows from
Assumption~\ref{ass:hypo}  \cite{MatE,Romito,AgrSar}.
\end{remark}

For given $M,N>0$, let $\J$ denote the set of $(\N,\Sigma)$ such that Assumptions~\ref{ass:coercive},
\ref{ass:hypo}, and \ref{ass:control} hold. Note that $(\N,\Sigma)$ can be viewed as a vector in $\R^n$ with
$n = d \bigl(M +\sum_{k=0}^N \bigl({d+k \atop k}\bigr)\bigr)$, so that $\J$ is nothing but some possibly rather complicated
open subset of $\R^n$.
With these assumptions in hand, we have the following result, which follows immediately
from \cite{DPZ,MT}:

\begin{theorem}
For every $(\N,\Sigma) \in \J$, the system \eref{e:SDE} possesses a unique 
invariant probability measure  $\mu_{(\N,\Sigma)}$.
\end{theorem}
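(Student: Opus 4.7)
The plan is to split the statement into an \emph{existence} part, handled by a Lyapunov/Krylov--Bogoliubov argument drawing on Assumption~\ref{ass:coercive}, and a \emph{uniqueness} part, handled by combining the strong Feller property coming from Assumption~\ref{ass:hypo} with the topological irreducibility coming from Assumption~\ref{ass:control}. Both ingredients are standard once the assumptions are in place, so the proof will essentially be a reduction to the classical results in \cite{DPZ,MT}.

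For existence, the natural candidate Lyapunov function is $V(x) = 1 + \|x\|^2$. Applying It\^o's formula to the solution of \eref{e:SDE}, the generator acts as
\begin{equ}
\CL V(x) = 2\scal{x,\N(x)} + \mathrm{tr}(\Sigma\Sigma^*)\;,
\end{equ}
and Assumption~\ref{ass:coercive} gives $\CL V(x) \le \tilde C - 2c\|x\|^{N}$, which in turn implies a bound of the form $\CL V \le K - \kappa V$ for suitable constants $K,\kappa > 0$ (since $N \ge 1$). Such a Foster--Lyapunov drift condition, together with the obvious Feller property of \eref{e:SDE} (continuous dependence on initial data for SDEs with locally Lipschitz, polynomially growing coefficients and additive noise is standard), yields tightness of the time averages $\frac{1}{T}\int_0^T\bigl(\CP_t^{(\N,\Sigma)}\bigr)^*\delta_{x_0}\,dt$ and hence the existence of at least one invariant probability measure by Krylov--Bogoliubov.

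For uniqueness, the key observation is that Assumption~\ref{ass:hypo} is precisely H\"ormander's parabolic bracket condition for the generator of \eref{e:SDE} (the iterated Lie brackets of the drift with the constant vector fields $\sigma_k$ span $\R^d$), so H\"ormander's theorem applies and every transition probability $P_t(x,\cdot)$ has a smooth density with respect to Lebesgue measure for $t>0$. In particular $\CP_t$ is strong Feller for every $t>0$. Assumption~\ref{ass:control}, via the Stroock--Varadhan support theorem, says that the support of $P_t(x,\cdot)$ contains arbitrarily small balls around any prescribed target point, so the chain is topologically irreducible. Doob's theorem (see \cite{DPZ}) then forces any two invariant measures to be mutually equivalent, and combined with the strong Feller property this forces them to coincide.

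The main obstacle (and the reason the quoted references are needed) is the combination in the irreducibility step: verifying that approximate controllability of the deterministic control system genuinely implies the required positivity of $P_t(x,B_\eps(y))$ for all $x,y$ and $t$ is exactly where Stroock--Varadhan is invoked, and wrapping this together with H\"ormander smoothness to conclude uniqueness is packaged in the framework of \cite{MT}. Everything else is a routine check that the hypotheses of those two general theorems are met in our polynomial-nonlinearity additive-noise setting.
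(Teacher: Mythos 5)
Your proposal is correct and takes essentially the same route as the paper's own (very brief) proof: existence from Assumption~\ref{ass:coercive} via Krylov--Bogoliubov, and uniqueness from the strong Feller property implied by Assumption~\ref{ass:hypo} together with the topological irreducibility implied by Assumption~\ref{ass:control}, all reduced to the standard results in \cite{DPZ,MT}. The only cosmetic point is that for $N=1$ the drift estimate gives $\CL V \le K - \kappa \|x\|^{N}$ rather than $\CL V \le K - \kappa V$, but this weaker Foster--Lyapunov bound already yields the tightness of the time-averaged laws that Krylov--Bogoliubov requires.
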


\begin{proof}
The existence of an invariant measure follows from Assumption~\ref{ass:coercive} by Krylov-Bogoliubov.
The uniqueness is then a consequence of the strong Feller property (which follows from Assumption~\ref{ass:hypo})
and the topological irreducibility (which follows from Assumption~\ref{ass:control}) 
of the associated Markov semigroup.
\end{proof}

The aim of this section is to prove the following result:

\begin{theorem}\label{theo:main}
Fix $M,N\ge 1$ and let $\phi \colon \R^d\to \R$ be a $\CC^1$ function such that
\begin{equ}
\sup_{x \in \R^d} \bigl(|\phi(x)| + \|D\phi(x)\|\bigr)e^{-\delta \|x\|^2} < \infty\;,
\end{equ}
for every $\delta > 0$. Then, the map $(\N,\Sigma) \mapsto \int_{\R^d} \phi(x)\,\mu_{(\N,\Sigma)}(dx)$ is
$\CC^1$ on $\J$ and its derivative is given by the formula \eref{e:linearResponse}.
\end{theorem}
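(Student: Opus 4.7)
The plan is to derive Theorem~\ref{theo:main} from Theorem~\ref{theo:linresp}, with Theorem~\ref{theo:HarrisWeak} supplying the required spectral gap. Because the admissible $\phi$ is permitted Gaussian growth at any rate, the natural weights are $V(x) = \exp(\alpha\|x\|^2)$, $W(x) = \exp(\beta\|x\|^2)$ and $U(x) = \exp(\gamma\|x\|^2)$ with $0 < \alpha < \beta < \gamma$ all taken sufficiently small (depending only on the coercivity constants of $\N$ and on $\Sigma$). The hypothesis on $\phi$ ensures that $\phi \in \CC^1_{V,W}$ for any such choice. Openness of $\J$ together with the local stability of Assumption~\ref{ass:coercive} lets us fix a small neighbourhood of the base parameter on which the constants $c,C$ of \eref{e:bounddissip} are uniform.

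The bulk of the work is Assumption~\ref{ass:gap}, obtained from Theorem~\ref{theo:HarrisWeak}. The Lyapunov inequality $\CP_t V \le \hf V + K$ comes from Itô's formula applied to $\exp(\alpha\|X_t\|^2)$: the dissipation $\scal{x,\N(x)} \le C - c\|x\|^N$ absorbs the Itô correction $\alpha\,\mathrm{tr}(\Sigma\Sigma^*) + 2\alpha^2\|\Sigma^* x\|^2$ for $\alpha$ small and $t$ sufficiently large. The super-Lyapunov condition \eref{e:superlyap} then follows from the ordering $\alpha \ll \beta \ll \gamma$, which leaves enough slack in the exponents to accommodate squaring and the prefactors $U_1^2, U_2^2$. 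Irreducibility \eref{e:irred} is the Stroock--Varadhan support theorem applied to Assumption~\ref{ass:control}: from any sublevel set of $V$ the control system can be steered close to a reference point $x_\star$, and a tube argument transfers this to the SDE with a uniform positive lower bound. The technical core is the gradient estimate Assumption~\ref{ass:contr}: in the hypoelliptic setting it comes from Malliavin calculus via Bismut integration by parts, representing $D\CP\phi(x)$ as an expectation against a weight built from the inverse Malliavin matrix. Assumption~\ref{ass:hypo} guarantees invertibility of that matrix, and inverse moment bounds of the form $\E\bigl[\|M_t(x)^{-1}\|^p\bigr] \le \exp(\eta\|x\|^2)$ for arbitrarily small $\eta$ (standard for polynomial drifts, compare the argument in \cite{Hypo}) provide the weights $U_1, U_2$ required by \eref{e:gradientbound}.

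Assumption~\ref{ass:diff} is then direct. Writing $\CP_t^a\phi(x) = \E\phi(X_t^{a,x})$ and differentiating in $a = (\N,\Sigma)$ yields
\begin{equation*}
\d \CP_t^a\phi(x) = \E\bigl[D\phi(X_t^{a,x})\,\partial_a X_t^{a,x}\bigr],
\end{equation*}
where the tangent process $\partial_a X_t^{a,x}$ satisfies a linear SDE with source term proportional to the derivative of the coefficients along the path. Exponential moment bounds on $X_t^{a,x}$ (from the same Lyapunov calculation) together with Gronwall applied to the variational equation yield $\|\partial_a X_t^{a,x}\| \le (1+\|x\|^N)\,Z$ with $Z$ having all Gaussian moments, and combining with the admissible growth $\|D\phi(y)\| \le W(y)$ produces \eref{e:boundderPt} with a slightly inflated weight $U$. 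Assumption~\ref{ass:bound} is then immediate from the same uniform Lyapunov bound integrated against $\mu_a$ via Krylov--Bogoliubov.

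With all three hypotheses verified, Theorem~\ref{theo:linresp} applies componentwise in $a \in \R^n$ and delivers partial differentiability of $a \mapsto \scal{\phi,\mu_a}$ together with formula \eref{e:linearResponse}. Continuity of these partial derivatives in $(\N,\Sigma) \in \J$, and hence the $\CC^1$ conclusion, follows from dominated convergence applied to the right-hand side: $\d\CP_t^{a_0}$, $(1-\CP_t^{a_0})^{-1}$, and $\mu_{a_0}$ all depend continuously on $a_0$ in the topologies provided by the preceding estimates. The single substantial obstacle is the hypoelliptic gradient estimate of Assumption~\ref{ass:contr}; everything else is essentially careful bookkeeping of Gaussian weights consistent with the polynomial-dissipative structure furnished by Assumption~\ref{ass:coercive}.
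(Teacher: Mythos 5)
Your proposal follows essentially the same route as the paper's proof: verify the hypotheses of Theorem~\ref{theo:HarrisWeak} (Lyapunov bounds from Assumption~\ref{ass:coercive} via It\^o and exponential martingale estimates, irreducibility from Assumption~\ref{ass:control} plus the support theorem, and the gradient bound of Assumption~\ref{ass:contr} from Malliavin calculus under Assumption~\ref{ass:hypo}) to get the spectral gap with Gaussian weights $\exp(\eta\|x\|^2)$, then check Assumptions~\ref{ass:diff} and~\ref{ass:bound} via the variational equation for the flow and the a priori moment bounds of Proposition~\ref{prop:bounds}, and finally apply Theorem~\ref{theo:linresp} componentwise in $(\N,\Sigma)$. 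The only place you go slightly beyond the paper's own write-up is in noting explicitly that the $\CC^1$ conclusion additionally requires continuity of the partial derivatives in $(\N,\Sigma)$, a point the paper leaves implicit.
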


Before we turn to the proof of Theorem~\ref{theo:main}, we start by providing the 
\textit{a priori} bounds on the solutions to \eref{e:SDE} required for our analysis. For this, we first introduce the
Jacobian process of \eref{e:SDE}, which is the $d\times d$ matrix-valued two-parameter process $J_{s,t}$ given by 
the solution to the following random ODE:
\begin{equ}[e:defJac]
{dJ_{s,t} \over dt} = D\N(x)J_{s,t}\;,\qquad J_{s,s} = \mathrm{Id}\;.
\end{equ}
We also introduce the second variation process $J^{(2)}_{s,t}$ which is a bilinear map-valued process given by
\begin{equ}
{dJ^{(2)}_{s,t}(\xi,\zeta) \over dt} = D\N(x)J^{(2)}_{s,t}(\xi,\zeta) + D^2\N(x)\bigl(J_{s,t}\xi, J_{s,t}\zeta\bigr) \;,\qquad J^{(2)}_{s,s} = 0\;,
\end{equ}
for every $\xi$ and $\zeta$ in $\R^d$. We then have the following:

\begin{proposition}\label{prop:bounds}
Let $(\N,\Sigma)$ be such that Assumption~\ref{ass:coercive} holds. Then, there exist strictly positive constants $\eta$, $\nu$ and $C$ and, for every $\eps > 0$ and $p>0$ there exists a constant $C_{p,\eps}$ such that the bounds
\begin{equs}
\E \exp\Bigl(\eta \|x(t)\|^2 + \eta \int_0^t \|x(s)\|^N\,ds \Bigr) &\le C\exp \bigl(\eta e^{-\nu t} \|x(0)\|^2\bigr)\;, 
\label{e:boundSol}\\
\E \bigl(\|J_{s,t}\|^p + \|J^{(2)}_{s,t}\|^p \bigr) &\le C_{p,\eps} \exp \bigl(\eps \|x(0)\|^2\bigr)\;,
\label{e:boundJac}
\end{equs}
hold for every initial condition $x(0) \in \R^d$, and for every $s,t \in [0,1]$.
\end{proposition}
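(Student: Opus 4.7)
My plan is to establish the exponential moment bound \eref{e:boundSol} first, via It\^o's formula applied to a time-dependent exponential Lyapunov functional that exploits the dissipativity of Assumption~\ref{ass:coercive}, and then to deduce the Jacobian bounds \eref{e:boundJac} from it by combining Gr\"onwall's inequality on the pathwise linear equations \eref{e:defJac} with the polynomial growth of $D\N$ and $D^2\N$.

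For \eref{e:boundSol} I would apply It\^o's formula to $V(s,x(s))=\exp\bigl(\alpha(s)\|x(s)\|^2+\int_0^s\beta(u)\|x(u)\|^N\,du\bigr)$ for scalar functions $\alpha,\beta$ to be chosen. Using Assumption~\ref{ass:coercive} in the form $2\alpha\scal{x,\N(x)}\le 2\alpha C-2c\alpha\|x\|^N$ and bounding the It\^o correction by $2\alpha^2\|\Sigma\|^2\|x\|^2$, the drift of $\log V$ is dominated by
\begin{equ}
(\alpha'(s)+2\alpha(s)^2\|\Sigma\|^2)\|x\|^2+(\beta(s)-2c\alpha(s))\|x\|^N+\alpha(s)(2C+\mathrm{tr}(\Sigma\Sigma^T))\;.
\end{equ}
I would then pick $\alpha(s)=\eta e^{-\nu(t-s)}$ (so $\alpha(0)=\eta e^{-\nu t}$ and $\alpha(t)=\eta$) together with $\beta(s)=2c\kappa\alpha(s)$ for some $\kappa\in(0,1)$, which makes the $\|x\|^N$ coefficient strictly negative, while the $\|x\|^2$ contribution is absorbed either by Young's inequality (when $N>2$) or directly (when $N=2$, once $\eta$ and $\nu$ are chosen small enough in terms of $c$ and $\|\Sigma\|$). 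The drift of $\log V$ is thus bounded above by a constant $K_0$; a standard stopping-time localization then gives $\E V(t)\le e^{K_0 t}V(0,x(0))\le e^{K_0}V(0,x(0))$ for $t\le 1$. Since $\beta(u)\ge 2c\kappa\eta e^{-\nu}$ uniformly on $[0,t]$ and $V(0,x(0))=\exp(\eta e^{-\nu t}\|x(0)\|^2)$, the bound \eref{e:boundSol} follows after rescaling $\eta$ down to $\min(\eta,2c\kappa\eta e^{-\nu})$.

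For \eref{e:boundJac}, the polynomial structure of $\N$ gives $\|D\N(x)\|\le C(1+\|x\|^{N-1})$, so Gr\"onwall applied to \eref{e:defJac} yields the pathwise bound $\|J_{s,t}\|\le\exp(C+C\int_0^1\|x(u)\|^{N-1}\,du)$. Young's inequality $C\|x\|^{N-1}\le \delta\|x\|^N+C_\delta$ (any $\delta>0$) then produces $\|J_{s,t}\|^p\le C_{p,\delta}\exp(\delta\int_0^1\|x(u)\|^N\,du)$; taking expectation and invoking \eref{e:boundSol} with $\delta$ in its admissible range yields $\E\|J_{s,t}\|^p\le C_{p,\eps}\exp(\eps\|x(0)\|^2)$ for any desired $\eps>0$. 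For $J^{(2)}_{s,t}$, variation of constants gives $J^{(2)}_{s,t}(\xi,\zeta)=\int_s^t J_{r,t}D^2\N(x(r))(J_{s,r}\xi,J_{s,r}\zeta)\,dr$, and combining $\|D^2\N(x)\|\le C(1+\|x\|^{N-2})$ with H\"older's inequality in time and in $\omega$, the bound just derived for $J$, and the exponential moment estimate on $\|x(r)\|$ (a corollary of \eref{e:boundSol}) delivers the analogous estimate for $J^{(2)}$. The main obstacle is the simultaneous calibration of $\alpha(\cdot)$ and $\beta(\cdot)$ in \eref{e:boundSol}: one must enforce (i) $\beta-2c\alpha<0$ to make $V$ a supermartingale modulo bounded drift, (ii) the $\|x\|^2$ contribution being absorbable, and (iii) the boundary conditions $\alpha(t)=\eta$ and $\alpha(0)=\eta e^{-\nu t}$ to produce the sharp right-hand side, all with a single joint choice of small parameters $\eta$, $\nu$, and $\kappa$ depending only on $c$ and $\|\Sigma\|$.
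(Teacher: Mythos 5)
Your proof is correct and follows essentially the same route as the paper's. For \eref{e:boundJac} the two arguments coincide: Gr\"onwall on the linear Jacobian equation, the polynomial bound $\|D\N(x)\|\le C(1+\|x\|^{N-1})$, Young's inequality to trade $\|x\|^{N-1}$ for $\delta\|x\|^{N}$, and then an appeal to \eref{e:boundSol}; you additionally spell out the variation-of-constants treatment of $J^{(2)}_{s,t}$, which the paper leaves implicit. For \eref{e:boundSol} the paper applies It\^o's formula to $\|x\|^2$ and then simply cites the exponential martingale inequality of \cite[Lemma~5.1]{SGNS}, whereas you reprove that estimate from scratch via the time-dependent exponential supermartingale $\exp\bigl(\alpha(s)\|x\|^2+\int_0^s\beta(u)\|x(u)\|^N\,du\bigr)$ with $\alpha(s)=\eta e^{-\nu(t-s)}$; this is precisely the mechanism behind the cited lemma, so the content is the same, just self-contained. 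One small caveat: after your final rescaling of the left-hand exponent down to $\min(\eta,2c\kappa\eta e^{-\nu})$, the right-hand side still carries the original, larger coefficient $\eta e^{-\nu t}$, so what you actually obtain is the bound with possibly different constants multiplying $\|x(t)\|^2$, $\int_0^t\|x(s)\|^N\,ds$ and $\|x(0)\|^2$, rather than the literal single-$\eta$ form of the statement (matching them would require roughly $2c\kappa e^{-\nu}\ge 1$). This is harmless: every subsequent use of \eref{e:boundSol} in the paper (Propositions~\ref{prop:Malliavin} and \ref{prop:gradient}, the Lyapunov bound in \eref{e:superlyap}, and Assumption~\ref{ass:bound}) only needs that weaker form, but it is worth stating the estimate you prove honestly with distinct constants.
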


\begin{proof}
Applying It\^o's formula to the function $x \mapsto \|x\|^2$, it follows from our assumption that
\begin{equ}
{1\over 2} d\|x(t)\|^2 \le - c \|x\|^N\,dt + C\,dt + \sum_{k=1}^M \scal{x(t),\sigma_k}\,dw_k(t)\;,
\end{equ}
for a possibly different constant $C$.
Note that this inequality should really be interpreted in integral form. The bound \eref{e:boundSol} 
then follows  from the version of the exponential martingale inequality given in \cite[Lemma~5.1]{SGNS}.

In order to obtain the bound \eref{e:boundJac},
note that it follows from \eref{e:defJac} that there exists a constant $C$ such that 
\begin{equ}
\|J_{s,t}\|^p \le \exp\Bigl(pC \int_s^t \bigl(1 + \|x(r)\|^{N-1}\bigr)\,dr\Bigr)\;.
\end{equ}
The requested bound now follows immediately from \eref{e:boundSol}, noting that for every $\delta>0$ 
there exists a constant $\tilde C_{p,\delta}$
such that the inequality
\begin{equ}
pC\bigl(1 + \|x\|^{N-1}\bigr) \le \delta \|x\|^N + \tilde C_{p,\delta}\;,
\end{equ}
holds for every $x \in \R^d$.
\end{proof}

The main ingredient for all probabilistic proofs of regularising properties for SDEs is the \textit{Malliavin matrix}
of the process. This is the $\R^{d\times d}$-valued matrix $\CM_t$ defined by
\begin{equ}
\CM_t = \int_0^t J_{s,t} \Sigma \Sigma^* J_{s,t}^*\,ds\;,
\end{equ}
where $J_{s,t}$ is the Jacobian of the flow as above and $\Sigma$ is the noise matrix defined at the
start of this section. Note that $\CM_t$ is a random object which furthermore depends on the initial
condition $x_0$ of \eref{e:SDE}. We have the following \textit{a priori} bound:

\begin{proposition}\label{prop:Malliavin}
Let $(\N,\Sigma)$ be such that Assumptions~\ref{ass:coercive} and \ref{ass:hypo} hold.
Then, for every $t>0$, every $\delta > 0$, and every $p\ge 1$ there exists a constant $C>0$ such that the bound
\begin{equ}
\E \|\CM_t^{-{1\over 2}}\|^p \le C \exp\bigl(\delta \|x_0\|^2\bigr)\;,
\end{equ}
holds for every $x_0 \in \R^d$.
\end{proposition}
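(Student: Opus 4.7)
The plan is to establish the polynomial tail bound
\[
\mathbf{P}\Bigl(\inf_{\|\xi\|=1}\xi^{*}\CM_t\xi < \eps\Bigr)\le C_{q,\delta}\exp(\delta\|x_0\|^2)\,\eps^q
\]
uniformly in $\eps > 0$, for every $q \ge 1$ and $\delta > 0$. A standard application of the identity $\E\|\CM_t^{-1/2}\|^p = p\int_0^\infty \lambda^{p-1}\mathbf{P}\bigl(\lambda_{\min}(\CM_t) < \lambda^{-2}\bigr)\,d\lambda$ then converts this into the claimed moment bound by choosing $q$ large in terms of $p$.

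First I would reduce the supremum over $\xi$ to a fixed direction by a covering argument. The map $\xi \mapsto \xi^{*}\CM_t\xi$ has Lipschitz constant controlled by $\|\CM_t\|$, whose moments are bounded via \eref{e:boundJac} by constants of the form $C_{p,\eps'}\exp(\eps'\|x_0\|^2)$ for arbitrary $\eps' > 0$. Covering $S^{d-1}$ by a net of polynomial cardinality in $\eps^{-1}$ and applying a union bound reduces matters to estimating $\mathbf{P}(\xi^{*}\CM_t\xi < 2\eps)$ for a fixed unit vector $\xi$, at the cost of only a polynomial prefactor in $\eps$.

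Next I would carry out the H\"ormander--Norris iteration powered by Assumption~\ref{ass:hypo}. Setting $\eta_s = (J_{0,s}^{-1})^{*} J_{0,t}^{*}\xi$, one has $\xi^{*}\CM_t\xi = \int_0^t \|\Sigma^{*}\eta_s\|^2\,ds$, and $\eta_s$ solves the random ODE $\dot\eta_s = -D\N(x_s)^{*}\eta_s$ whose coefficients have moments controlled by \eref{e:boundSol}. A quantitative Norris-type lemma, adapted to semimartingales with the quadratic-variation control supplied by \eref{e:boundSol}--\eref{e:boundJac}, converts smallness of $\int_0^t |\scal{\eta_s,v}|^2\,ds$ into $L^\infty$-smallness of $\scal{\eta_s,v}$, up to an exceptional event whose probability is polynomially small in $\eps$ with exponential-in-$\|x_0\|^2$ prefactor of arbitrarily small rate. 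Starting with $v \in A_0 = \Span\{\sigma_1,\ldots,\sigma_M\}$, differentiating in $s$ produces terms involving the multilinear form $D\N_N(x_s,\ldots,x_s)^{*}v$; a further application of the Norris lemma together with an appropriate selection argument then extracts smallness of $\scal{\eta_s,w}$ for $w$ ranging over $A_1$, and iterating at most $d$ times exhausts $A_d = \R^d$. This forces $\|\eta_t\| = \|\xi\|$ to be polynomially small in $\eps$, contradicting $\|\xi\| = 1$ once $\eps$ is sufficiently small.

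The main obstacle will be tracking how the exponential-in-$\|x_0\|^2$ prefactors accumulate across the iteration. Each application of the Norris lemma and each appeal to Proposition~\ref{prop:bounds} introduces a factor $\exp(\eps'\|x_0\|^2)$; however, since the number of iterations is bounded by $d$ (independent of $\eps$ and $x_0$) and \eref{e:boundSol}--\eref{e:boundJac} permit $\eps'$ to be chosen \emph{arbitrarily small}, fixing $\eps' = \delta/(d+1)$ from the outset keeps the cumulative prefactor below $\exp(\delta\|x_0\|^2)$. On the polynomial side, if each iteration produces a gain of $\eps^\alpha$ for some fixed $\alpha \in (0,1)$, then $d$ iterations give $\eps^{\alpha^d}$, so taking the starting exponent to be $q/\alpha^d$ recovers the required rate $\eps^q$.
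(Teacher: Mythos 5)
Your outline is sound, but it takes a genuinely different route from the paper: the paper's entire proof of Proposition~\ref{prop:Malliavin} consists of citing \cite[Thm~6.7]{Hypo} and checking its hypotheses via Proposition~\ref{prop:bounds} with the choice $\psi_0(x)=\exp(\eta\|x\|^2)$ for small $\eta$. What you propose is, in effect, a from-scratch reconstruction of the proof of that cited theorem. The individual ingredients you list are all correct: the reduction of inverse moments to polynomial tail bounds on $\lambda_{\min}(\CM_t)$ (with $q>p/2$ sufficing), the net-plus-union-bound reduction to a fixed $\xi$ using the moment bounds \eref{e:boundJac} on $\|\CM_t\|$, the identity $\xi^{*}\CM_t\xi=\int_0^t\|\Sigma^{*}\eta_s\|^2\,ds$ with $\eta_s=J_{s,t}^{*}\xi$ solving the adjoint random ODE with terminal condition $\eta_t=\xi$, and the prefactor bookkeeping (finitely many iterations, each with an $\exp(\eps'\|x_0\|^2)$ cost at arbitrarily small rate $\eps'$). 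The trade-off is that the paper's argument is two lines but opaque, while yours is self-contained in principle but hides the genuinely hard step inside the phrase ``a further application of the Norris lemma together with an appropriate selection argument.'' Passing from smallness of $\scal{\eta_s,v}$ for $v\in A_k$ to smallness for $v\in A_{k+1}$ requires extracting the fully polarized top-degree form $\N_N(y_1,\ldots,y_N)$ appearing in Assumption~\ref{ass:hypo} from the quadratic variation of $s\mapsto\scal{\eta_s,D\N(x_s)v}$; since $D\N(x)v$ mixes monomials of every degree up to $N-1$ in $x$, this takes roughly $N$ nested applications of the Norris lemma per level together with an argument separating distinct Wiener monomials, and it is exactly this machinery that constitutes the technical core of \cite{Hypo}. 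Any complete write-up of your route would have to reproduce that analysis, at which point one might as well cite it --- which is what the paper does.
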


\begin{proof}
This is an immediate consequence of \cite[Thm~6.7]{Hypo}. The assumptions of that theorem are satisfied
by Proposition~\ref{prop:bounds} 
if we choose $\psi_0(x) = \exp(\eta \|x\|^2)$ for a sufficiently small constant $\eta > 0$. 
\end{proof}

As a consequence of Proposition~\ref{prop:Malliavin}, we obtain the following gradient bound
for the Markov semigroup $\CP_t$ generated by solutions to \eref{e:SDE}:

\begin{proposition}\label{prop:gradient}
Let $(\N,\Sigma)$ be such that Assumptions~\ref{ass:coercive} and \ref{ass:hypo} hold.
Then, for every $t>0$ and every $\eta > 0$ there exists constant $C$ such that the bound
\begin{equ}
\| D\CP_t \phi(x)\|^2 \le C \exp\bigl(\eta \|x\|^2\bigr) \bigl(\CP_t \phi^2\bigr)(x)\;,
\end{equ}
holds for every $x \in \R^d$ and every $\phi \in \CC_b^1(\R^d)$.
\end{proposition}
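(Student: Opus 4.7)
The plan is to prove this gradient bound via a Bismut--Elworthy--Li / Malliavin integration by parts formula, which converts the derivative acting on $\phi$ into a Malliavin weight acting on $\phi$ itself, after which Cauchy--Schwarz immediately produces the factor $\CP_t\phi^2$.

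First I fix a direction $\xi \in \R^d$ with $\|\xi\| = 1$ and apply the chain rule together with differentiation under the expectation to obtain $(D\CP_t\phi)(x)\cdot \xi = \E\bigl[D\phi(x_t)\,J_{0,t}\xi\bigr]$. The key step is then to choose an $\R^M$-valued process $h_s = \Sigma^* J_{s,t}^* \CM_t^{-1} J_{0,t}\xi$, $s \in [0,t]$. By the very definition of $\CM_t$ one checks that $\int_0^t J_{s,t}\Sigma h_s\,ds = J_{0,t}\xi$, which is precisely saying that the Malliavin derivative of $x_t$ in the Cameron--Martin direction $\int_0^\cdot h_s\,ds$ equals $J_{0,t}\xi$. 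Therefore $D\phi(x_t)J_{0,t}\xi$ coincides with the Malliavin directional derivative $\mathscr{D}_h \phi(x_t)$, and the Malliavin integration by parts formula yields
\begin{equ}
(D\CP_t\phi)(x)\cdot \xi = \E\bigl[\phi(x_t)\,\delta(h)\bigr]\;,
\end{equ}
where $\delta(h)$ is the Skorohod integral of $h$ (note that $h$ is \emph{not} adapted, since it depends on $J_{s,t}$ and $\CM_t$, so Skorohod is genuinely needed).

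Next I apply Cauchy--Schwarz to get $|(D\CP_t\phi)(x)\cdot\xi|^2 \le (\CP_t\phi^2)(x)\cdot\E|\delta(h)|^2$, which reduces the whole problem to bounding $\E|\delta(h)|^2$ by $C\exp(\eta\|x\|^2)$. For this I use the standard Meyer inequality
\begin{equ}
\E|\delta(h)|^2 \le \E\int_0^t \|h_s\|^2\,ds + \E\int_0^t\!\!\int_0^t \|\mathscr{D}_r h_s\|^2\,dr\,ds\;,
\end{equ}
and estimate each term. The first term involves $\Sigma^* J_{s,t}^* \CM_t^{-1} J_{0,t}\xi$, which is bounded using Proposition~\ref{prop:bounds} for the Jacobian and Proposition~\ref{prop:Malliavin} for $\CM_t^{-1}$. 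For the second term I differentiate $h_s$ in the Malliavin sense, which produces $\mathscr{D}_r J_{s,t}$ and $\mathscr{D}_r J_{0,t}$, both expressible through $J^{(2)}_{s,t}$, together with $\mathscr{D}_r \CM_t^{-1} = -\CM_t^{-1}(\mathscr{D}_r \CM_t)\CM_t^{-1}$. Each of these ingredients has arbitrary moments bounded by $C_{p,\delta}\exp(\delta\|x\|^2)$ for arbitrarily small $\delta$, thanks to Propositions~\ref{prop:bounds} and~\ref{prop:Malliavin}. A Hölder inequality with enough factors distributes these estimates and, choosing each $\delta$ small enough that the accumulated exponent sits below the given $\eta$, yields the required bound.

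The main obstacle is the appearance of $\mathscr{D}_r \CM_t^{-1}$, which brings in two additional copies of $\CM_t^{-1}$ multiplied by the second variation process $J^{(2)}$. Keeping careful track of the polynomial combinations of $\|J_{s,t}\|$, $\|J^{(2)}_{s,t}\|$ and $\|\CM_t^{-1}\|$ that arise, and verifying that every such factor has \emph{sub-Gaussian} moments of order $\exp(\delta\|x\|^2)$ for arbitrary $\delta$, is what makes the arbitrarily small $\eta$ in the statement attainable; if any ingredient only had polynomial-in-$\|x\|$ moments this strategy would fail. Fortunately this is exactly the content of Propositions~\ref{prop:bounds} and~\ref{prop:Malliavin}, so assembling them with Hölder gives the conclusion.
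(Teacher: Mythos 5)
Your proposal is correct and follows essentially the same route as the paper's proof: the Malliavin integration by parts with the non-adapted weight $h_\xi(s) = \Sigma^* J_{s,t}^* \CM_t^{-1} J_{0,t}\xi$, Cauchy--Schwarz, the generalised It\^o (Skorokhod) isometry, and the explicit computation of $\D_r h_\xi$ in terms of $J^{(2)}$ and $\D_r\CM_t^{-1}$, all controlled by Propositions~\ref{prop:bounds} and~\ref{prop:Malliavin}. No substantive differences to report.
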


\begin{proof}
The proof is almost identical to that of \cite[Thm~5.4]{Hypo}, which is in turn
largely inspired by that of \cite[Thm~3.2]{Norris}. We therefore only give a sketch of the proof
and refer to \cite{Norris,Hypo,Nualart} for details.
Let $\xi \in \R^d$ be an arbitrary deterministic
vector with $\|\xi\| = 1$ and fix a terminal time $t>0$. We then 
define the $\R^M$-valued stochastic process $h_\xi$ by
\begin{equ}
h_\xi(s) = \Sigma^* J_{s,t}^* \CM_t^{-1} J_{0,t}\xi\;.
\end{equ}
Note that $t$ is a \textit{fixed} terminal time, so that $h_\xi$ is a stochastic process
which is not adapted to the filtration generated by the Wiener processes driving \eref{e:SDE}.
With this notation at hand, it follows from the integration by parts formula on Wiener space
\cite{Nualart} that one has the identity
\begin{equ}
\scal{D\CP_t \phi(x),\xi} = \E\Bigl(\phi(x_t)\int_0^t h_\xi(s)\,dW(s)\Bigr)\;,
\end{equ}
where the stochastic integral should be interpreted as a Skorokhod integral \cite{Nualart}.
It then follows from the generalised It\^o isometry \cite{Nualart} that
\begin{equ}
\scal{D\CP_t \phi(x),\xi}^2 \le \bigl(\CP_t \phi^2\bigr)(x)\,\E\Bigl(\int_0^t \|h_\xi(s)\|^2\,ds + \int_0^t \|\D_r h_\xi(s)\|^2 \,ds\,dr \Bigr)\;,
\end{equ}
where $\D_r F$ denotes the Malliavin derivative at time $r$ of the random variable $F$. It follows from
the chain rule that one has the identity
\begin{equs}
\D_r h_\xi(s) &= \Sigma^* \D_r J_{s,t}^* \CM_t^{-1} J_{0,t}\xi + \Sigma^* J_{s,t}^* \CM_t^{-1} \D_r J_{0,t}\xi \\
& \quad - \int_0^t \Sigma^* J_{s,t}^* \CM_t^{-1} \bigl(\D_r J_{u,t} \Sigma \Sigma^* J_{u,t}^* + J_{u,t} \Sigma \Sigma^* \D_r J_{u,t}^*\bigr)\CM_t^{-1}J_{0,t}\xi\,du \;.
\end{equs}
Note now that the Malliavin derivative of the Jacobian $J_{s,t}$ is given by
\begin{equ}
\D_r J_{s,t}\xi = 
\left\{\begin{array}{cl}
	0 & \text{if $r \ge t$,} \\
	J_{s,t}^{(2)}(J_{r,s} \Sigma , \xi)& \text{if $r \le s$,} \\
	J_{r,t}^{(2)}(\Sigma, J_{s,r}\xi) & \text{otherwise.}
\end{array}\right.
\end{equ}
Collecting all of these identities, it then follows immediately from the a priori bounds of Proposition~\ref{prop:bounds}
that the requested bound holds.
\end{proof}

With all of these preliminary results at hand, we can now proceed to the proof of Theorem~\ref{theo:main}:

\begin{proof}[of Theorem~\ref{theo:main}]
As a first step, we show that the assumptions of Theorem~\ref{theo:HarrisWeak} hold.
It follows from Proposition~\ref{prop:gradient} that Assumption 4 is satisfied with $U_1(x) = 0$ and
$U_2(x) = \exp\bigl(\delta \|x\|^2\bigr)$ for arbitrary choice of $\delta > 0$. In view of
Proposition~\ref{prop:bounds}, it is therefore natural to choose $V(x) = W(x) = \exp(\eta \|x\|^2)$
for some sufficiently small value of $\eta > 0$. It is possible to first choose $\eta$ small enough so that the bounds
in Proposition~\ref{prop:bounds} hold and then $\delta$ small enough so that the bounds \eref{e:superlyap} hold.

We now want to show that \eref{e:irred} holds as well. It follows from Assumption~\ref{ass:control} combined
with the support theorem \cite{StroockVaradhan} that $\CP_t(x,B_\eps(y)) > 0$ for every $x,y \in \R^d$, $t > 0$
and $\eps > 0$. Since furthermore the Markov semigroup $\CP_t$ is Feller and the level sets of $V$ are compact,
\eref{e:irred} follows at once. We thus conclude that, for every $t>0$, the assumptions of 
Theorem~\ref{theo:HarrisWeak} hold, so that 
the semigroup $\CP_t$ possesses a spectral gap in 
the space $\CC^1_{V,V}$ for sufficiently small values $\eta > 0$. 

We now show that Theorem~\ref{theo:linresp} can be applied to our situation, if we choose $a$ to be any
of the components of either $\N$ or $\Sigma$.
First of all, it follows from our previous discussion that Assumption~\ref{ass:gap} is satisfied
for every $t>0$, provided that we choose $V(x) = 1+\beta \exp(\eta \|x\|^2)$ and $W(x) = \delta^{-1} \exp(\eta \|x\|^2)$
for sufficiently small values $\beta$, $\delta$, and $\eta$ (possibly depending on the value of $t$).
Furthermore, if we set $U(x) = 2\exp(\eta' \|x\|^2)$ for some sufficiently small value $\eta' > \eta$, then it follows
from \eref{e:boundSol} that Assumption~\ref{ass:bound} is also satisfied.
It therefore suffices to check that Assumption~\ref{ass:diff} is also satisfied.

Denote by $\Phi_t(x,\N,\Sigma)$ the solution to \eref{e:SDE} with initial condition $x$ and parameters $\N$
and $\Sigma$. By considering $\N$ and $\Sigma$ as additional dynamical variables that follow the evolution
equations $\dot \N = 0$, $\dot \Sigma = 0$, it
 follows from \cite{Kunita,ElwCarv} that on $\R^d \times \J$,
$\Phi_t$ is a $\CC^\infty$ map of all of its arguments. Furthermore, it follows from the variation of constants
formula that its derivative with respect to $\N$ and $\Sigma$
in arbitrary directions $\delta\N$ and $\delta \Sigma$ is given by
\begin{equ}
\scal{D_\N \Phi_t, \delta \N} = \int_0^t J_{s,t} \delta \N(x_s)\,ds\;,\quad 
\scal{D_\Sigma \Phi_t, \delta \Sigma} = \int_0^t J_{s,t} \delta \Sigma\,dW(s)\;.
\end{equ}
Combining this with the bounds from Proposition~\ref{prop:bounds}, it follows at once that Assumption~\ref{ass:diff}
is satisfied with
\begin{equ}
V(x) = 1+\beta e^{\eta \|x\|^2}\;,\quad W(x) = \delta^{-1}e^{\eta \|x\|^2}\;,\quad U(x) = V(x) + e^{2\eta \|x\|^2}\;,
\end{equ}
provided that $\beta$, $\delta$, and $\eta$ are sufficiently small. This allows us to
apply Theorem~\ref{theo:linresp}, thus concluding the proof.
\end{proof}


\label{sec:SPDEs}

\bibliographystyle{Martin}
\bibliography{./Majda}

\end{document}